\newcommand\abs[1]{\lvert#1\rvert}
\newtheorem{theoremcounter}{Theorem}
\newtheorem{proposition}[theoremcounter]{Proposition}
\newtheorem{lemma}[theoremcounter]{Lemma}
\newtheorem{remark}[theoremcounter]{Remark}
\def\Def{\vcentcolon=}
\begin{document}

%\title{Finite Frobenius Coin Problem Proof}
\title{Short Proof: Exact Solution to the Finite Frobenius Coin Problem}
\date{}

\author[1]{Lorenzo De Gaspari}
\author[2]{Marco Ronzani}
%\author[2]{Corresponding Author\thanks{marco.ronzani@polimi.it}}
{
    \makeatletter
    \renewcommand\AB@affilsepx{: \protect\Affilfont}
    \makeatother
    \affil[ ]{Affiliations}
    \makeatletter
    \renewcommand\AB@affilsepx{, \protect\Affilfont}
    \makeatother
    %\affil[1]{Somewhere in Australia}
    \affil[2]{DEIB, Politecnico di Milano}
}
{
    \makeatletter
    \renewcommand\AB@affilsepx{: \protect\Affilfont}
    \makeatother
    %\affil[ ]{\authorcr Emails}
    \affil[ ]{\authorcr Contacts}
    \makeatletter
    \renewcommand\AB@affilsepx{, \protect\Affilfont}
    \makeatother
    \affil[1]{\href{mailto:degaspari.lorenzo.edu@gmail.com}{degaspari.lorenzo.edu@gmail.com}}
    \affil[2]{\href{mailto:marco.ronzani@polimi.it}{marco.ronzani@polimi.it}}
}
% PHANTOM AUTHOR FOR PEER REVIEW:
\begin{comment}
\author{\phantom{Nobody}}
%\author[2]{Corresponding Author\thanks{marco.ronzani@polimi.it}}
\affil{\phantom{Nowhere}}
\affil{\phantom{Nothing}}
\end{comment}

\maketitle

\begin{abstract}
    The Frobenius Coin Problem is a classic question in mathematics: given coins of specified denominations, what is the largest amount that cannot be formed using only those coins?
    This brief work covers a variation of such question, posing a limit on the number of coins available for each denomination.
    %Then, the new problem becomes finding the count of distinct values that can be represented, and those that cannot, within the finite integer interval ranging from zero to the sum of all coins.
    Thus, the new problem becomes finding the count of distinct values that can be represented, and those that cannot, within the finite set of integers ranging from zero to the sum of all coins.
    %We thus name this version of the problem "finite".
    We refer to this version of the problem as the ``finite'' case.
    We will show how this closely relates to the original question, and prove an exact formula solving the problem when exactly two denominations are involved.
\end{abstract}

\vspace{28pt}

%%%% BODY STARTS HERE %%%%

\section{Problem Statement}

% NOTE:
% - conical combination: linear combination with non-negative coefficients
% - convex combination: linear combination with real coefficients in [-1, 1]
% - restricted linear combination (or linear combination with bounded coefficients): linear combination with coefficients restricted to a certain subset of values
Let $A$, $B$, $m$, and $n$ be positive integers such that $m$ and $n$ are co-prime.
We want to determine the total count of distinct integer values that can be achieved by the restricted integer linear combination:
\begin{equation*}
    am + bn
\end{equation*}
as $a \in \{0, \ldots, A\}$ and $b \in \{0, \ldots, B\}$ vary.

This is a variant of the famous Frobenius Coin Problem for two numbers \cite{DiophantineFrobeniusProblem}, in which the integer linear combination is instead unrestricted (that is, $a, b \in \mathbb{N}$).
%For such original version, we reference J. J. Sylvester's solution \cite{SylvesterFrobeniusProof}, that provides an exact formula for both the highest unachievable number and, more hereby relevant, the total count of unachievable values.
We recall that, as proven by J. J. Sylvester's solution~\cite{SylvesterFrobeniusProof}, in such original version there are exactly
\begin{equation}\label{def:h}
    h(m, n) \Def \frac{1}{2}(m - 1)(n - 1)
\end{equation}
unachievable non-negative integer values of which the largest is
\begin{equation}\label{def:g}
    g(m, n) \Def mn - m - n \text{,}
\end{equation}
that is sometimes called "Frobenius number". 
Since we consider $m$ and $n$ to be fixed constants, from now on we will be omitting the dependence of $h$ and $g$ on such quantities.

\section{Discussion}

First, we define the set of admissible coefficient pairs:
\begin{equation}\label{def:I}
    %I \Def \left\{ (a, b) \in \mathbb{N}^2 \colon 0 \leq a \leq A, 0 \leq b \leq B \right\}
    I \Def \{0, \ldots, A\} \times \{0, \ldots, B\} \text{.}
\end{equation}
Let us call ``valid'' any pair $(a, b) \in \mathbb{Z}$ s.t. $(a, b) \in I$.

Then, let us redefine the problem as finding the cardinality of the set: %(oftentimes referred to as a ``sumset''):
\begin{equation}\label{def:S}
    S \Def \left\{ am + bn \colon (a, b) \in I \right\} \text{,}
\end{equation}
that is, finding $\abs{S}$.

\begin{proposition}\label{prop:main_result}
Let $A$, $B$, $m$, and $n$ be positive integers such that $m$ and $n$ are co-primes.
Moreover, let $I$ and $S$ be defined as in \eqref{def:I}, \eqref{def:S}.
Then:
\begin{equation}
    \abs{S} =
    \begin{cases}
        Am + Bn + 1 - 2h & \text{if } A \geq n \text{, } B \geq m \\
        (A + 1)(B + 1) & \text{otherwise.} \\
    \end{cases}
\end{equation}
\end{proposition}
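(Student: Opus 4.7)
The plan is to count $\abs{S}$ by analyzing the surjective evaluation map $(a, b) \mapsto am + bn$ from $I$ onto $S$. Writing $f(v)$ for the number of pairs in $I$ mapping to $v$, we have $\sum_{v \in S} f(v) = \abs{I} = (A + 1)(B + 1)$, so
\[
    \abs{S} = (A + 1)(B + 1) - c, \qquad c \Def \sum_{v \in S} \bigl(f(v) - 1\bigr).
\]
Since $\gcd(m, n) = 1$, the equation $a_1 m + b_1 n = a_2 m + b_2 n$ forces $(a_1 - a_2,\, b_1 - b_2)$ to be an integer multiple of $(n, -m)$; equivalently, each nonempty fiber consists of collinear integer points along the direction $(n, -m)$.

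For the ``otherwise'' branch ($A < n$ or $B < m$), no nonzero integer multiple of $(n, -m)$ can appear as the difference of two points of $I$, so every fiber is a singleton; thus $c = 0$ and $\abs{S} = (A + 1)(B + 1)$.

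For the main branch ($A \geq n$ and $B \geq m$), I introduce the translation $\sigma(a, b) \Def (a - n, b + m)$, which preserves $am + bn$. Because $I$ is an axis-aligned rectangle, each fiber of size $k$ forms a chain $(a_0, b_0), \sigma(a_0, b_0), \ldots, \sigma^{k - 1}(a_0, b_0)$ obtained by intersecting a line of direction $(-n, m)$ with $I$; within this chain, exactly the first $k - 1$ points have the property that applying $\sigma$ once more yields a point still in $I$. Summing that count over all fibers gives
\[
    c = \abs{\{(a, b) \in I : \sigma(a, b) \in I\}} = \abs{[n, A] \times [0, B - m]} = (A - n + 1)(B - m + 1).
\]

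A direct expansion then confirms $(A + 1)(B + 1) - (A - n + 1)(B - m + 1) = Am + Bn + 1 - (m - 1)(n - 1) = Am + Bn + 1 - 2h$, completing the proof. The only nontrivial step is recognizing the chain structure of the fibers; everything else reduces to counting lattice points in the subrectangle $[n, A] \times [0, B - m]$ of $I$ and to routine algebra.
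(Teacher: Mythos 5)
Your proof is correct, but it takes a genuinely different route from the paper. You count the excess $c = \abs{I} - \abs{S}$ directly via the fibers of $(a,b) \mapsto am + bn$: since $\gcd(m,n)=1$, each fiber is a chain under the shift $\sigma(a,b) = (a-n, b+m)$, and because $I$ is a rectangle the fiber is a contiguous segment of that chain, so summing $f(v)-1$ over fibers counts exactly the pairs $(a,b) \in I$ with $\sigma(a,b) \in I$, i.e.\ the lattice points of $[n,A] \times [0, B-m]$, giving $c = (A-n+1)(B-m+1)$ and then the formula by routine algebra (and the same fiber picture handles the small-coefficients branch, where fibers are singletons). The paper instead works on the value side: it invokes Sylvester's classical results $h = \frac{1}{2}(m-1)(n-1)$ and $g = mn - m - n$, uses the reflection $s \mapsto (Am+Bn) - s$ of $S$ to show the gap set $H$ contains $H_o$ and its (disjoint) reflections, and then a shifting argument ($a \leftarrow a - n$, $b \leftarrow b + m$, essentially your $\sigma$) to show no other gaps exist, concluding $\abs{H} = 2h$. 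Your argument is self-contained --- it does not use Sylvester's theorem at all --- and treats both cases uniformly with one counting principle; the paper's argument is less elementary in that it leans on the classical solution, but in exchange it identifies exactly which values are unachievable (namely $H_o$ and its reflections), structural information your count does not provide.
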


%We will now propose and prove two exact formulae that together answer the present problem under all circumstances.
We will proceed to prove the two above formulae.

\subsection{Small Coefficients Case}

This addresses the ``otherwise'' case.
%Intuitively, when $A$ and $B$ are smaller than the $n$ and $m$ respectively, the latters' valid linear combinations become all distinct, thus trivially counting them answers the problem.
%Intuitively, when $A$ and $B$ are smaller than the $n$ and $m$ constants respectively, the latters' valid linear combinations become all distinct, thus trivially counting them answers the problem.
Intuitively, when $A$ and $B$ are smaller than $n$ and $m$ respectively, the latters' valid linear combinations become all distinct, thus trivially counting them answers the problem.
%That is because all but one linear combination for each value in $S$ are invalid.

%We start by addressing the "otherwise" first:
\begin{lemma}\label{lemma:small_coeffs}
    If either $A < n$ or $B < m$ then $\abs{S} = (A + 1)(B + 1)$.
\end{lemma}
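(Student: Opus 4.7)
The plan is to show that, under either hypothesis, the map $\varphi \colon I \to S$ defined by $\varphi(a,b) \Def am + bn$ is injective. Since $|I| = (A+1)(B+1)$ by \eqref{def:I}, and $S = \varphi(I)$ by \eqref{def:S}, injectivity immediately yields $|S| = (A+1)(B+1)$.

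To prove injectivity, I would start from an arbitrary coincidence $am + bn = a'm + b'n$ with $(a,b), (a',b') \in I$, rearrange it as $(a - a')m = (b' - b)n$, and exploit the co-primality of $m$ and $n$: this forces $n \mid (a - a')$ and $m \mid (b' - b)$. The key observation is then a size bound. Any two admissible first coordinates $a, a'$ satisfy $|a - a'| \leq A$, and similarly $|b - b'| \leq B$. So if $A < n$, the divisibility $n \mid (a - a')$ combined with $|a - a'| \leq A < n$ forces $a = a'$, and then $bn = b'n$ gives $b = b'$. Symmetrically, if $B < m$, the divisibility $m \mid (b - b')$ with $|b - b'| \leq B < m$ forces $b = b'$, and hence $a = a'$.

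There is essentially no obstacle here: the argument is a one-line application of Euclid's lemma together with the trivial bound on the differences of the coordinates. The only care needed is to handle the disjunction in the hypothesis ``$A < n$ or $B < m$'' by treating the two cases symmetrically, noting that in each case one of the two divisibility conditions alone already collapses the equation.
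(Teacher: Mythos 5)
Your proposal is correct and follows essentially the same route as the paper: count $I$, show the evaluation map $(a,b)\mapsto am+bn$ is injective via coprimality forcing $n \mid (a-a')$ and $m \mid (b'-b)$, then kill the differences with the size bounds $\abs{a-a'} \leq A$ and $\abs{b-b'} \leq B$. Your explicit use of absolute values and symmetric treatment of the two cases of the disjunction is, if anything, slightly tidier than the paper's write-up.
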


\begin{proof}
    Clearly, $\abs{I} = (A + 1)(B + 1)$.
    Consider a function $f \colon I \rightarrow S$ s.t. $f((a, b)) \Def am + bn$. it is surjective by the definition of $S$.
    If $f$ were to also be injective, then we would have $\abs{I} = \abs{S}$, concluding the proof.
    
    Suppose by contradiction that $f$ is not injective, then $f((a_1, b_1)) = f((a_2, b_2))$ for some $(a_1, b_1), (a_2, b_2) \in I$, $(a_1, b_1) \neq (a_2, b_2)$.
    Then, $(a_1 - a_2)m = (b_2 - b_1)n$, where both the right- and left-hand side are integers, and respectively multiples of $m$ and $n$, thus by the equality both are multiples of $mn$ due to the hypothesis of $m$ and $n$ being co-primes.
    Hence, $(a_1 - a_2)$ is a multiple of $n$ as well.
    By the definition of $I$ and the hypothesis $A < n$ we get $0 \leq a_1 - a_2 \leq A < n$, it follows that $a_1 - a_2 = 0$.
    Analogously, $b_2 - b_1 = 0$, which implies the injectivity of $f$ and concludes the proof.
\end{proof}

\subsection{Large Coefficients Case}

In this more general case, where $A$ and $B$ are larger than $n$ and $m$ respectively, there could be multiple valid linear combinations of $n$ and $m$ producing the same value within $S$.
%Therefore, our approach will be to remove the count of all unachievable values from $\abs{I}$ ...
Therefore, our approach will be to leverage the result in \eqref{def:h} to remove the count of unachievable values from $\abs{I}$, subsequently proving that such removed values were all the unachievable ones.

%Consider now the original Frobenius problem, all its unachievable values are contained in $H$ since $g < Am + Bn$ (see \eqref{def:g}), and $g$ is the largest among such values.
%Consider now the original Frobenius problem, all its unachievable values are contained in $H$ since $g$ (see \eqref{def:g}) is the largest among such values and restricting the linear combination may only introduce additional unachievable values.
Let us define the set of unachievable values
\begin{equation}\label{def:H}
    H \Def \{0, \ldots, Am + Bn\} \,\smallsetminus\, S\text{.}
\end{equation}
Consider now the original Frobenius problem and let $H_o$ be its set of unachievable values, namely
\begin{equation}\label{def:H_o}
    H_o \Def \{v_o \in \mathbb{N} \colon am+bn \neq v_o \;\forall (a,b) \in \mathbb{N}^2\}\text{,}
\end{equation}
and $\abs{H_o} = h$ by \eqref{def:h}.
Then surely $H_o \subseteq H$ since $g < 2mn \leq Am + Bn$, where $g$ (see \eqref{def:g}) is the largest value in $H_o$, and restricting the linear combination may only introduce additional unachievable values.

%Before continuing, we observe that the problem enjoys a reflection property.
%\begin{observation}\label{obs:reflection}
\begin{remark}\label{rem:reflection}
    %Given a pair $(a, b) \in I$, we can define its reflection $t_s \Def (A - a)m + (B - b)n$. It is easily proven that $t_s \in S$. 
    The problem, and $S$ in particular, enjoys a reflection property.
    Given a pair $(a, b) \in I$, clearly both $s \Def am + bn$ and $t \Def (A - a)m + (B - b)n$ are in $S$.
    And we can interpret $t$ as the refection of $s$, that is, $t = (Am + Bn) - s$.
\end{remark}

\begin{lemma}\label{lemma:large_coeffs_like_original}
    If both $A \geq n$ and $B \geq m$ then $\abs{S} \leq Am + Bn + 1 - (m - 1)(n - 1)$.
\end{lemma}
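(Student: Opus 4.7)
The plan is to exhibit at least $(m-1)(n-1) = 2h$ elements of $H$, since then the identity $\abs{S} = (Am + Bn + 1) - \abs{H}$, which follows immediately from \eqref{def:H}, delivers the desired upper bound on $\abs{S}$.

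To produce these $2h$ unachievable values I will place two disjoint copies of $H_o$ inside $H$. The first copy is $H_o$ itself, already known to be a subset of $H$ by the discussion following \eqref{def:H_o}. The second copy is obtained by reflecting $H_o$ about $Am + Bn$, using Remark~\ref{rem:reflection}: I will argue that if $v \in H_o$ and $Am + Bn - v$ were in $S$, say $Am + Bn - v = am + bn$ for some $(a,b) \in I$, then the reflected pair $(A - a, B - b) \in I \subseteq \mathbb{N}^2$ would give $v = (A-a)m + (B-b)n$, contradicting $v \in H_o$. Hence $(Am + Bn) - H_o \subseteq H$.

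The final step is to verify that the two copies are disjoint, and this is exactly where the large-coefficients hypothesis enters. Since $H_o \subseteq \{0, \ldots, g\}$ and its reflection sits in $\{Am + Bn - g, \ldots, Am + Bn\}$, disjointness reduces to $Am + Bn > 2g$. The assumptions $A \geq n$ and $B \geq m$ give $Am + Bn \geq 2mn$, which comfortably exceeds $2g = 2mn - 2m - 2n$. Combining these facts yields $\abs{H} \geq 2h = (m-1)(n-1)$ and hence the claim.

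I do not anticipate a real obstacle here: the argument is essentially bookkeeping plus the reflection observation already recorded in Remark~\ref{rem:reflection}. The only subtle point is the separation between $H_o$ and its reflection, which could fail without the size hypothesis on $A$ and $B$; that is the one inequality worth stating carefully, and it is precisely what makes the ``otherwise'' case of Proposition~\ref{prop:main_result} qualitatively different from this one.
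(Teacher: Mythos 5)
Your proposal is correct and follows essentially the same route as the paper: place $H_o$ and its reflection (via Remark~\ref{rem:reflection}) inside $H$, use the hypotheses $A \geq n$, $B \geq m$ to separate the two copies, and conclude $\abs{H} \geq 2h$ so that $\abs{S} = Am + Bn + 1 - \abs{H}$ gives the bound. The only cosmetic differences are that you spell out the contrapositive argument showing reflections of $H_o$ lie in $H$ (the paper cites the remark tersely) and you phrase disjointness as $Am + Bn > 2g$ rather than comparing both copies to $\tfrac{1}{2}(Am + Bn)$.
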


\begin{proof}
    By Remark~\ref{rem:reflection}, for every value in $H_o$ its reflection will also be in $H$.
    %(Note: the concept of reflection arises from our variant of the problem and isn't present in the original).
    %If, with $A \geq n$ and $B \geq m$, we could prove that $\forall v_o \in H_o$ it holds $v_o \leq \frac{1}{2}(Am + Bn)$, implying also the opposite for its reflection $(Am + Bn) - v_o \geq \frac{1}{2}(Am + Bn)$, then $H_o$ and the set of its reflections would be disjoint and it would follow that $\abs{H} = 2\abs{H_o} = 2h = (m - 1)(n - 1)$.
    That being the case, with $A \geq n$ and $B \geq m$, it holds that $\forall v_o \in H_o$
    \begin{equation*}
        v_o \leq g < mn \leq \frac{1}{2}(Am + Bn) \text{,}
    \end{equation*}
    implying also the opposite for the reflection of $v_o$
    \begin{equation*}
        (Am + Bn) - v_o > \frac{1}{2}(Am + Bn) \text{,}
    \end{equation*}
    then $H_o$ and the set of its reflections are disjoint and it follows that $\abs{H} \geq 2\abs{H_o} = 2h = (m - 1)(n - 1)$.
    Finally, from \eqref{def:H} we have $\abs{S} = \abs{\{0, \ldots, Am + Bn\}} - \abs{H}$ since $S \subseteq \{0, \ldots, Am + Bn\}$, consequently
    \begin{equation*}
        \abs{S} = Am + Bn + 1 - \abs{H} \leq Am + Bn + 1 - (m - 1)(n - 1).
    \end{equation*}
\end{proof}

What will be proven next is that no other value belongs to $H$ except for those in $H_o$ and their reflections, leading to $\abs{H} = (m - 1)(n - 1)$. In turn implying $\abs{S} \leq Am+Bn - 2h$.

\begin{lemma}\label{lemma:no_one_missed}
    Let $s \in \left\{0, \ldots, \frac{1}{2}(Am + Bn)\right\} \smallsetminus H_o$, with $H_o$ as in \eqref{def:H_o}. Then $s \in S$.
\end{lemma}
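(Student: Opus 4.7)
The plan is to exploit the structure of integer representations. Since $s \notin H_o$, there exists some $(a_0, b_0) \in \mathbb{N}^2$ with $s = a_0 m + b_0 n$. Because $\gcd(m, n) = 1$, every pair $(a, b) \in \mathbb{Z}^2$ satisfying $s = am + bn$ has the form $(a_0 - kn,\, b_0 + km)$ for some integer $k$. The task reduces to choosing $k$ so that both $0 \leq a \leq A$ and $0 \leq b \leq B$ hold simultaneously, that is, so that $(a,b) \in I$.

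First, I would reduce $a_0$ modulo $n$: write $a_0 = qn + a_*$ with $q \geq 0$ and $0 \leq a_* \leq n-1$, and set $b_* \Def b_0 + qm$, so that $(a_*, b_*) \in \mathbb{N}^2$ is still a valid representation of $s$. By the hypothesis $A \geq n$ we automatically have $a_* \leq A$, and using $s \leq \tfrac{1}{2}(Am + Bn)$,
\begin{equation*}
    b_* \;=\; \frac{s - a_* m}{n} \;\leq\; \frac{s}{n} \;\leq\; \frac{Am}{2n} + \frac{B}{2},
\end{equation*}
so $b_* \leq B$ precisely when $Am \leq Bn$. A fully symmetric reduction of $b_0$ modulo $m$ produces $(a', b') \in \mathbb{N}^2$ with $b' \leq m - 1 \leq B$ automatic and, under $Bn \leq Am$, with $a' \leq A$.

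Since one of $Am \leq Bn$ or $Am \geq Bn$ must hold, at least one of the pairs $(a_*, b_*)$ or $(a', b')$ lies in $I$, yielding $s \in S$. The main subtlety is that a single modular reduction cannot in general bound both coordinates at once: reducing $a_0$ modulo $n$ controls $a$ only, and vice versa. The case split on the sign of $Am - Bn$ is what threads the needle, and the hypothesis $s \leq \tfrac{1}{2}(Am + Bn)$ is calibrated exactly to make the winning choice work --- consistent with Remark~\ref{rem:reflection} being used to cover the upper half of the range by reflection.
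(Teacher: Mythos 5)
Your proof is correct, and it takes a genuinely different (and arguably cleaner) route than the paper's. The paper also starts from a representation $s = a_0 m + b_0 n$ with non-negative coefficients and exploits the family $(a_0 - kn,\, b_0 + km)$, but it proceeds by \emph{iterative descent}: assuming without loss of generality $a_0 > A$, it derives $b_0 + m < B$ from $s \leq \frac{1}{2}(Am + Bn)$ together with $A \geq n$, replaces $(a_0, b_0)$ by $(a_0 - n,\, b_0 + m)$, and repeats, arguing that $b$ stays admissible while $a$ eventually becomes admissible. You instead jump in one step to the fully reduced representative ($0 \leq a_* < n$, which $A \geq n$ places in range) and resolve the remaining coordinate by an explicit case split on whether $Am \leq Bn$ or $Am \geq Bn$, via the bound $b_* \leq \frac{Am}{2n} + \frac{B}{2} \leq B$; this trades the paper's termination-and-invariant argument for a closed-form verification. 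Your remark that a single modular reduction cannot control both coordinates, so the choice of which coordinate to reduce must depend on the sign of $Am - Bn$, is exactly the right subtlety; the paper's ``without loss of generality'' plays the analogous role by splitting on which coordinate of the given representation is out of bounds. Two small wording points: ``precisely when $Am \leq Bn$'' should read ``whenever $Am \leq Bn$'' (the condition is sufficient, not necessary, and only sufficiency is used), and you should state explicitly that the standing hypotheses $A \geq n$ and $B \geq m$ of the large-coefficients case are being invoked --- though the paper's own proof relies on them just as implicitly.
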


\begin{proof}
    Since $s \notin H_o$, $\exists (a, b) \text{ s.t. } s = am + bn$.
    If such pair of coefficients is valid, that is $(a, b) \in I$, the proof is complete.
    Suppose this is not the case, and, without loss of generality, assume $a > A$.
    Then
    \begin{equation*}
        Am + bn < am + bn = s \leq \frac{1}{2}(Am + Bn)\text{,}
    \end{equation*}
    thus
    \begin{equation*}
        m < (B - 2b)\frac{n}{A} \leq B - 2b \leq B - b\text{,}
    \end{equation*}
    and therefore
    \begin{equation*}
        b + m < B\text{.}
    \end{equation*}
    
    If we then rewrite $s$ by adding and subtracting $nm$ we get:
    \begin{equation*}
        s = am + bn = (a - n)m + (b - m)n \text{,}
    \end{equation*}
    %where, since $a > a - n > 0$ and $0 < b + m < B$, the pair $((a - n), (b - m))\in I$ is valid.
    where having $a > a - n > 0$ and $0 < b + m < B$ makes the pair $((a - n), (b - m))\in I$ valid.
    
    Therefore, by iterating this procedure with:
    \begin{equation*}
        \begin{cases}
            a \leftarrow a - n \\
            b \leftarrow b + m
        \end{cases}
    \end{equation*}
    we get a sequence of coefficients for which $b$ is always admissible and $a$ is eventually admissible.
    Hence, after a finite amount of iterations we get:
    \begin{equation*}
        (a^*, b^*) \in I \text{ s.t. } s = a^*m + b^*n
    \end{equation*}
    Consequently, $s \in S$.
\end{proof}

%From Lemma~\ref{lemma:no_one_missed} and Remark~\ref{rem:reflection} it naturally follows that also for any $s \in \left\{\frac{1}{2}(Am + Bn) + 1, \ldots, (Am + Bn)\right\}$, if $s$ is a reflection of a value in $H_o$, then $s \in H$, otherwise it must be that $s \in S$ because its reflection will be in $S$.

\begin{remark}\label{rem:no_one_missed}
    Let $s \in \left\{\frac{1}{2}(Am + Bn) + 1, \ldots, (Am + Bn)\right\} \smallsetminus \{(Am + Bn) - v_o \colon v_o \in H_o\}$, with the latter set containing the reflections of the values in $H_o$ as per \eqref{def:H_o} and Remark~\ref{rem:reflection}. Then $s \in S$.
\end{remark}

\begin{proof}
    From Lemma~\ref{lemma:no_one_missed} and Remark~\ref{rem:reflection} it follows that for any $s$ as stated above, if $s$ is a reflection of a value in $H_o$, then $s \in H$, otherwise it must be that $s \in S$ because its reflection is in $S$.
\end{proof}

%Prosecuting with the other case:
\begin{lemma}\label{lemma:large_coeffs}
    If both $A \geq n$ and $B \geq m$ then $\abs{S} = Am + Bn + 1 - (m - 1)(n - 1)$.
\end{lemma}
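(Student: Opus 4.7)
The plan is to close the gap between the upper bound already obtained in Lemma~\ref{lemma:large_coeffs_like_original} and a matching lower bound extracted from Lemma~\ref{lemma:no_one_missed} together with Remark~\ref{rem:no_one_missed}. Since all the hard work has been done in the previous statements, this final lemma is essentially a bookkeeping step: one just has to verify that the set $H$ of unachievable values in $\{0,\ldots,Am+Bn\}$ is exactly the disjoint union of $H_o$ and its reflections across the midpoint $\tfrac{1}{2}(Am+Bn)$.

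First I would split the interval $\{0,\ldots,Am+Bn\}$ into its lower half $L\Def\{0,\ldots,\lfloor\tfrac{1}{2}(Am+Bn)\rfloor\}$ and its upper half $U\Def\{\lfloor\tfrac{1}{2}(Am+Bn)\rfloor+1,\ldots,Am+Bn\}$. Lemma~\ref{lemma:no_one_missed} says that every element of $L\setminus H_o$ lies in $S$, and Remark~\ref{rem:no_one_missed} says that every element of $U$ which is not the reflection of an element of $H_o$ also lies in $S$. Hence every element of $H=\{0,\ldots,Am+Bn\}\setminus S$ belongs either to $H_o\cap L$ or to the set $R\Def\{(Am+Bn)-v_o:v_o\in H_o\}\subseteq U$, giving the inclusion
\begin{equation*}
    H\subseteq H_o\cup R.
\end{equation*}

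Next I would invoke the disjointness argument already used in Lemma~\ref{lemma:large_coeffs_like_original}: under the hypothesis $A\geq n$, $B\geq m$, every $v_o\in H_o$ satisfies $v_o\leq g<mn\leq \tfrac{1}{2}(Am+Bn)$, so $v_o\in L$ while its reflection lies strictly in $U$. Consequently $H_o\cap R=\emptyset$ and
\begin{equation*}
    \abs{H}\leq \abs{H_o}+\abs{R}=2h=(m-1)(n-1).
\end{equation*}
Combining this with the reverse inequality $\abs{H}\geq 2h$ proved in Lemma~\ref{lemma:large_coeffs_like_original} yields $\abs{H}=(m-1)(n-1)$, and then the identity $\abs{S}=Am+Bn+1-\abs{H}$ finishes the proof.

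The only thing that might require a second look is the boundary case $Am+Bn$ odd (so that the midpoint is a half-integer), but this does not affect the argument since Lemma~\ref{lemma:no_one_missed} and Remark~\ref{rem:no_one_missed} already cover $\{0,\ldots,Am+Bn\}$ disjointly and the strict inequality $g<\tfrac{1}{2}(Am+Bn)$ is preserved. I do not expect any real obstacle here: the substantive content sits in Lemma~\ref{lemma:no_one_missed} (the \emph{add $n$, subtract $m$} reduction to a valid pair), and the present lemma is simply its immediate counting consequence.
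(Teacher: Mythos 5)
Your proof is correct and takes essentially the same route as the paper: from Lemma~\ref{lemma:no_one_missed} and Remark~\ref{rem:no_one_missed} you deduce $H \subseteq H_o \cup \{(Am+Bn)-v_o \colon v_o \in H_o\}$, hence $\abs{H} \leq 2h$, and combining this with the lower bound of Lemma~\ref{lemma:large_coeffs_like_original} and the identity $\abs{S} = Am+Bn+1-\abs{H}$ gives the result. Your explicit handling of the half-integer midpoint is a harmless extra precaution that the paper's own proof glosses over.
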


\begin{proof}
    Lemma~\ref{lemma:no_one_missed} and Remark~\ref{rem:no_one_missed} imply that $H = H_o \cup \{(Am + Bn) - v_o \colon v_o \in H_o\}$, from which follows $\abs{H} \leq 2\abs{H_o} = 2h = (m - 1)(n - 1)$.
    Combining this with $\abs{H} \geq (m - 1)(n - 1)$ from Lemma~\ref{lemma:large_coeffs_like_original}, and knowing that $\abs{S} = \abs{\{0, \ldots, Am + Bn\}} - \abs{H}$ from \eqref{def:H}, results in the stated equality.
    %At last, from the definition of $H$ and $\abs{H} = (m - 1)(n - 1)$, we have that $\abs{S} = \abs{\{0, \ldots, Am + Bn\}} - \abs{H} = Am + Bn + 1 - (m - 1)(n - 1)$, which concludes the proof.
\end{proof}

Finally, Proposition \ref{prop:main_result} trivially follows from Lemmata \ref{lemma:small_coeffs} and \ref{lemma:large_coeffs}.
%$\square$

\printbibliography

\end{document}